\newtheorem{theorem}{Theorem}
\newtheorem{lemma}{Lemma}
\newcommand{\highlight}[1]{\textbf{\underline{#1}}}
\title{TADT-CSA: Temporal Advantage Decision Transformer with Contrastive State Abstraction for Generative Recommendation}
\author{
    Xiang Gao\textsuperscript{\rm 1},
    Tianyuan Liu\textsuperscript{\rm 2}\thanks{This work was done when Tianyuan Liu was an intern in Kuaishou Technology.},
    Yisha Li\textsuperscript{\rm 1},
    Jingxin Liu\textsuperscript{\rm 1}\thanks{Corresponding Author.},
    Lexi Gao\textsuperscript{\rm 1},
    Xin Li\textsuperscript{\rm 1},
    Haiyang Lu\textsuperscript{\rm 1},
    Liyin Hong\textsuperscript{\rm 1}
}
\newcommand{\best}[1]{\textbf{#1}}
\begin{document}

\maketitle

\begin{abstract}
With the rapid advancement of Transformer-based Large Language Models (LLMs), generative recommendation has shown great potential in enhancing both the accuracy and semantic understanding of modern recommender systems. Compared to LLMs, the Decision Transformer (DT) is a lightweight generative model applied to sequential recommendation tasks. However, DT faces challenges in trajectory stitching, often producing suboptimal trajectories. Moreover, due to the high dimensionality of user states and the vast state space inherent in recommendation scenarios, DT can incur significant computational costs and struggle to learn effective state representations. To overcome these issues, we propose a novel \textbf{T}emporal \textbf{A}dvantage \textbf{D}ecision \textbf{T}ransformer with \textbf{C}ontrastive \textbf{S}tate \textbf{A}bstraction (TADT-CSA) model. Specifically, we combine the conventional Return-To-Go (RTG) signal with a novel temporal advantage (TA) signal that encourages the model to capture both long-term returns and their sequential trend. Furthermore, we integrate a contrastive state abstraction module into the DT framework to learn more effective and expressive state representations. Within this module, we introduce a TA–conditioned State Vector Quantization (TAC-SVQ) strategy, where the TA score guides the state codebooks to incorporate contextual token information. Additionally, a reward prediction network and a contrastive transition prediction (CTP) network are employed to ensure that the state codebook preserves both the reward information of the current state and the transition information between adjacent states. Empirical results on both public datasets and an online recommendation system demonstrate the effectiveness of the TADT-CSA model and its superiority over baseline methods.

\end{abstract}


\section{Introduction}
Currently, most reinforcement learning (RL) applications in industrial recommender systems (RS) formulate the problem as an infinite-horizon, request-level Markov Decision Process (MDP) \cite{cai2023reinforcing, liu2024supervised, zhang2024unex}, and optimize long-term rewards through Temporal Difference (TD) error-based bootstrapping. However, in real-world RS environments, data distributions often exhibit significant fluctuations, especially during transitions from traffic high peaks to low peaks. The collected $(s_t, a_t, r_t, s_{t+1})$ samples can be highly noisy and stochastic, making it difficult for RL agents to obtain accurate Q-value estimates. Moreover, TD-based methods heavily rely on local information, which limits their ability to capture the long-term evolution of user interests and value accumulation over time.

The Decision Transformer (DT) \cite{chen2021decision} has emerged as a promising sequence modeling approach for offline reinforcement learning. Unlike traditional RL frameworks, DT treats policy learning as a conditional generation task, enabling it to model long-range dependencies more effectively. However, DT lacks the ability to perform trajectory stitching, a crucial capability that allows offline RL models to learn optimal policies from sub-optimal trajectories \cite{yamagata2023q}. As a result, DT may fail to learn optimal policies when high-return trajectories are rare or the environment is stochastic \cite{brandfonbrener2022does}.

Recently, DT has been adopted in generative recommendation systems \cite{wang2023generative} and long-sequence decision-making tasks. Compared with large language models (LLMs) \cite{ji2024genrec}, DT offers a lightweight generative architecture, making it suitable for deployment in industrial RS, such as auto bidding scenario \cite{gao2025generative}, with relatively low computational overhead.

Despite its advantages, existing DT-based recommendation models \cite{zhao2023user, wang2023causal, wang2024retentive, chen2024maximum, liu2024sequential} typically rely on simple embedding layers or shallow encoders for state representation learning. In industrial RS, however, the user base often reaches tens of millions, while the available items scales to billions. Consequently, the state space becomes the Cartesian product of these high-dimensional feature spaces, resulting in an extremely large and sparse representation space. Learning effective state representations under such conditions is highly challenging. 

To address the above issues, we propose a novel \textbf{T}emporal \textbf{A}dvantage \textbf{D}ecision \textbf{T}ransformer with \textbf{C}ontrastive \textbf{S}tate \textbf{A}bstraction (TADT-CSA) model.
We integrate the conventional Return-to-Go (RTG) score with a novel Temporal Advantage (TA) score to jointly condition the policy on both long-term returns and recent temporal trends. Secondly, we propose a pairwise ranking loss for the TADT to alleviate the limitations of the original behavior cloning (BC) objective and encourage the policy to improve along directions with high RTG scores. Furthermore, we propose a novel Contrastive State Abstraction (CSA) module for more accurate and efficient state representation learning. Within this module, the TA-conditioned State Vector Quantization (TAC-SVQ) strategy and state auxiliary networks help preserve useful trajectory and MDP information during state representation learning.

In summary, our contributions are as follows:
\begin{itemize}
    \item We introduce a novel TADT-CSA model for generative recommendation, especially tailored for large-scale industrial RS scenarios where the online environment is highly noisy and stochastic.
    \item We incorporate the temporal advantage score into the TADT model and propose a pairwise ranking loss to prevent the model from falling into simple behavior cloning.
    \item We propose a Contrastive State Abstraction (CSA) module for effective state representation learning, which is particularly beneficial when the state space is extremely large and sparse.
    \item Through extensive offline evaluation, online simulation and online A/B Tests, we demonstrate that the proposed TADT-CSA model outperforms existing methods and shows strong practical value in industrial applications.
\end{itemize}

\begin{figure*}[htbp]
\centering
\includegraphics[width=0.8\textwidth]{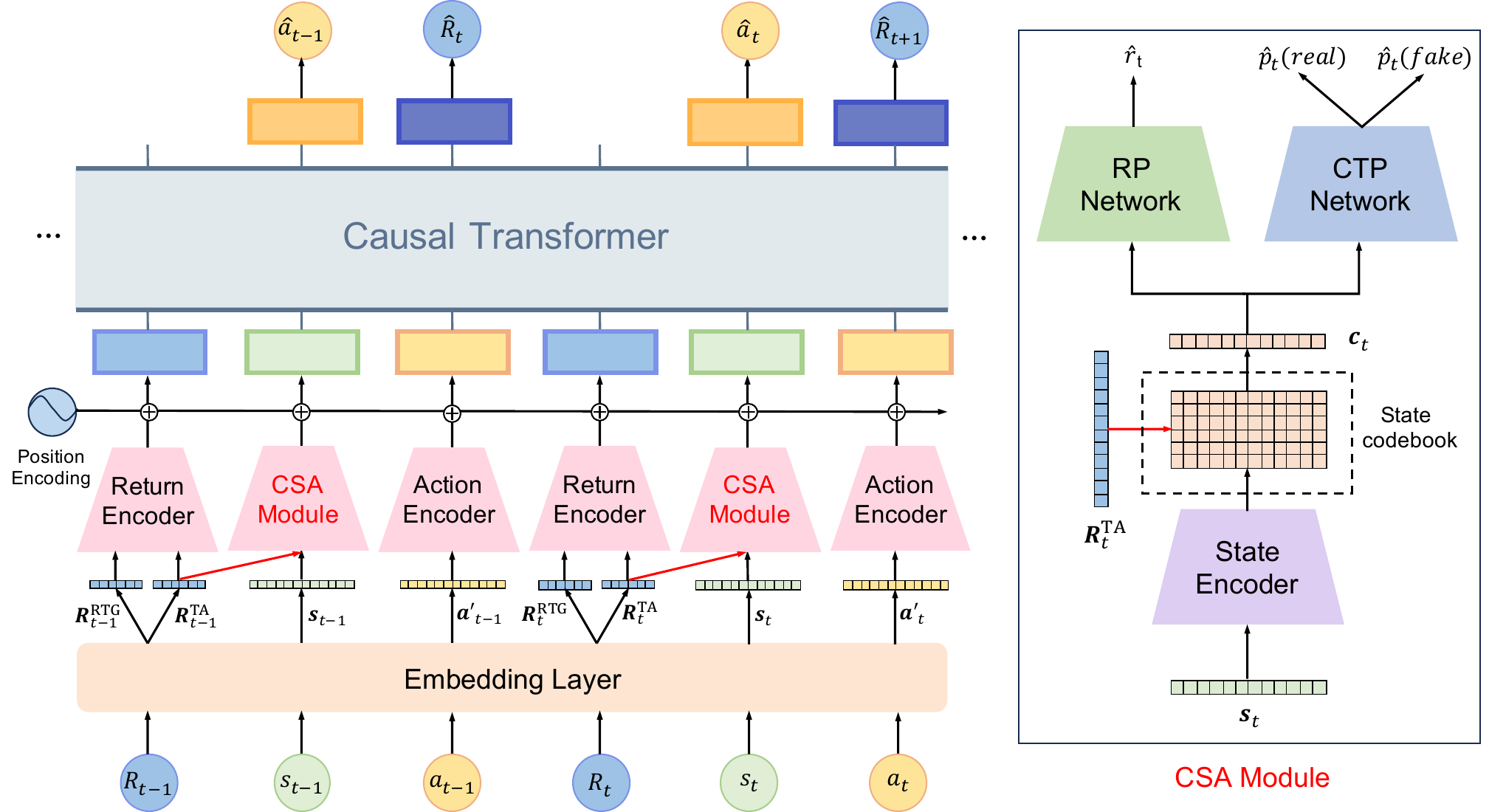}
\caption{Overall framework of the TADT-CSA model. $R$ denotes the return-conditioned signal shown in Eq \ref{eq:4}, RP represents reward prediction, and CTP stands for contrastive transition prediction.}
\label{fig:1}
\end{figure*}

\section{Problem Formulation}
Traditional recommendation systems typically adopt a discriminative paradigm, focusing on user-item scoring and ranking. In contrast, generative sequential recommendation frames the task as a sequence generation problem, where the model learns to autoregressively generate future user interactions based on historical trajectories and target objectives.

From the RL perspective, the sequential recommendation problem can be modeled using a MDP. Formally, let $\mathcal{U} = \left\{u_1, \cdots, u_n\right\}$ denote the set of users, and $\mathcal{I} = \left\{i_1, \cdots, i_m\right\}$ denote the candidate item set. The MDP for sequential recommendation can then be represented by the tuple $\left(\mathcal{S}, \mathcal{A}, \mathcal{P}, \mathcal{R}, \gamma\right)$, defined as follows:
\begin{itemize}
\item State Space $\mathcal{S}$: The state $s_t \in \mathcal{S}$ represents a user's current situation, often composed of static user features and historical interaction information. 
\item Action Space $\mathcal{A}$: The action $a_t \in \mathcal{A}$ corresponds to the recommended item list at time step $t$. 
\item Transition Probability $\mathcal{P}$: The transition function $p(s_{t+1} | s_t, a_t) \in \mathcal{P}$ defines the probability of transitioning to $s_{t+1}$ after taking action $a_t$ in state $s_t$. 
\item Reward $\mathcal{R}$: The reward function $\mathcal{R}$ maps state-action pairs to real-valued rewards. In practice, $r_t$ is often derived from immediate user feedback signals such as Click-Through Rate (CTR), or long-term engagement metrics like Return Frequency.
\item Discount Factor $\gamma$: $\gamma \in [0, 1]$ is a scalar that balances the trade-off between immediate and future rewards.
\end{itemize}

Based on this formulation, user-item interactions are represented as trajectories:
\begin{equation}
\tau = (s_1, a_1, r_1, \cdots, r_t, s_t, a_t, \cdots, s_T, a_T, r_T)    
\end{equation}
where $T$ denotes the maximum length of the trajectory. These trajectories can be collected into an offline dataset $\mathcal{D} = \left\{\tau_{u}\right\}$ , which serves as the training data for offline RL methods aiming to learn optimal recommendation policies. In the inference stage, given an initial state and a target return signal, a DT-based model autoregressively generates the sequence of recommended items, enabling personalized and temporally coherent recommendations.

\section{Method}
In this section, we present the proposed TADT-CSA model. We first introduce the overall framework, as illustrated in Fig. \ref{fig:1}, followed by a detailed description of the TADT architecture. Subsequently, we elaborate on the CSA module, including its design principles and theoretical analysis.
\subsection{Temporal Advantage Decision Transformer}
Due to the high level of noise and significant fluctuations in the data distribution within RS, TD error-based RL models struggle to produce accurate Q-value estimates and learn optimal policies. To address these challenges, we propose a TADT model based on long sequence modeling.

\subsubsection{Temporal Advantage Score} We introduce a novel Temporal Advantage (TA) score to enhance the standard Return-to-Go (RTG) signal with temporal trend information. Unlike RTG, which only captures cumulative future rewards, the TA score not only reflects the long-term reward but also preserves temporal trend information, enabling the model to better understand how recent decisions influence long-term objectives. Formally, given a trajectory $\tau = (s_1, a_1, r_1, \cdots, s_T, a_T, r_T)$, the return-to-go $R^{\text{RTG}}_t$ is defined as a discounted sum of future rewards:
\begin{equation}
R^{\text{RTG}}_t = \sum_{i = t}^{T} \gamma^{i - t} r_i
\end{equation}
Based on this, we define the TA score $R^{\text{TA}}_t$ as:
\begin{equation}
R^{\text{TA}}_t = \sum_{i = 2}^{t} \gamma^{t - i}\left(R^{\text{RTG}}_i - R^{\text{RTG}}_{i - 1}\right)
\end{equation}
which essentially computes a discounted accumulation of RTG differences over time. The initial TA score $R^{\text{TA}}_0$ is set to zero due to the lack of previous RTG values.

The new return-conditioned signal $R_t$ , used for policy generation, is then defined as a joint representation of both components:
\begin{equation}
R_t = [R^{\text{RTG}}_t, R^{\text{TA}}_t]
\label{eq:4}
\end{equation}

Compared with RTG, the TA score captures richer temporal dynamics by reflecting how recent actions influence long-term outcomes, even when RTG values are similar. This enables stronger gradient signals during training. For example, two sequences with identical RTG may exhibit distinct TA patterns, one showing steady growth, the other a rapid surge, providing more discriminative power for decision-making. This property makes the TA score especially valuable in large-scale RS, where RTG values can be sparse or poorly differentiated, and subtle changes in user behavior carry meaningful signals.

\subsubsection{TADT architecture}As illustrated in Fig \ref{fig:1}, a trajectory $\tau$ of length $T$ is decomposed into a sequence $(R_t, s_t, a_t)$ with total length $3T$, which is then fed into the Embedding Layer to obtain the corresponding state embedding $\mathbf{s}_t$, action embedding $\mathbf{a'}_t$ and return embedding $\mathbf{R}'_t = [\mathbf{R}^{\text{RTG}}_t, \mathbf{R}^{\text{TA}}_t]$. Then through the return encoder and action encoder, which are commonly modeled by Multi-Layer Perceptrons(MLPs), we can obtain the return token embedding $\mathbf{R}_t$ and action token embedding $\mathbf{a}_t$. Given the high dimensionality and large magnitude of the state space, particularly in large-scale RS, we introduce the Contrastive State Abstraction (CSA) module to compress the high-dimensional state embeddings $\mathbf{s}_t$ into low-dimensional vector-quantized embeddings $\mathbf{c}_t$ with codebook id $c_t$.

Finally, the trajactory token representation list, i.e. $\left(\mathbf{R}_1, \mathbf{c}_1, \mathbf{a}_1 \cdots, \mathbf{R}_T, \mathbf{c}_T, \mathbf{a}_T\right)$ are fed into a Causal Transformer model \cite{vaswani2017attention}, such as GPT \cite{radford2018improving}. The predicted action probability $p(a_t |\tau_{1:t-1}, s_t, R_t)$ is formulated as follows:
\begin{equation}
\begin{aligned}
&\hat{\mathbf{a}}_t = \text{GPT}(\mathbf{R}_{1:t-1}, \mathbf{c}_{1:t-1}, \mathbf{a}_{1:t-1}, \mathbf{R}_t, \mathbf{c}_t) \\
&\ell_i = \hat{\mathbf{a}}_t^\top \mathbf{\Phi}(i), \\
&\hat{p}(a_t \mid \tau_{1:t-1}, s_t, R_t) = \text{softmax}_{\mathcal{I}}(\ell_i) \\
\end{aligned}
\end{equation}
where $\mathcal{I}$ denotes the candidate item set, and $\mathbf{\Phi}(\cdot)$ is a learnable item embedding layer that generates item embeddings.

Additionally, the predicted return $\hat{R}_t = [\hat{R}^{\text{RTG}}_{t}, \hat{R}^{\text{TA}}_{t}]$ can be derived as follows:
\begin{equation}
\begin{aligned}
&\hat{\mathbf{R}}_t = \text{GPT}(\mathbf{R}_{1:t-1}, \mathbf{c}_{1:t-1}, \mathbf{a}_{1:t-1}) \\
& \hat{R}_t = \text{MLP}(\hat{\mathbf{R}}_t) \\
\end{aligned}
\end{equation}
Notably, the predicted return $\hat{R}_t$ can be used to generate a new return token during the inference phase, enabling dynamic adaptation based on historical interactions.

Since DT often struggles with stitching together optimal trajectories from sub-optimal ones, it may perform poorly in stochastic environments and heavily rely on high-return trajectories to achieve satisfactory performance. To address this limitation, we propose a novel pairwise ranking loss that complements the BC objective within the DT framework. 

For each time step $t$, we utilize the real RTG score to explicitly guide the policy, encouraging actions with higher RTG scores while discouraging those with lower ones. Notably, the policy distribution is conditioned on the sub-trajectory $(s_1, a_1, R_1, \cdots, s_t)$, which necessitates trajectory matching for constructing pairwise training samples. However, in large-scale RS, where states are typically high-dimensional, directly matching sub-trajectories based on $(s_1, a_1, R_1, \cdots, s_t)$ is computationally expensive.

To reduce the computational cost of sub-trajectory
matching, we instead use the quantized codebook indices $c_t$ as compact representations of the original states $s_t$. Specifically, we apply a sequence hashing technique based on the transformed sub-trajectory $(c_1, a_1, R_1, \cdots, c_t)$, which significantly improves efficiency. For each batch, we hash these sub-trajectories into a set of groups:
\begin{equation}
G_t = \left\{g_{t,1}, \cdots, g_{t,K}\right\}
\end{equation}
where $K$ denotes the number of groups.

We then introduce a quantile-based pairwise ranking loss, designed to enhance policy learning within each group. For each group $g_{t,k}$, we compute the $\beta$-quantile of the RTG values and divide the samples into positive and negative subsets. The threshold is defined as $\text{threshold}_{\beta} = \text{Quantile}(R^{\text{RTG}}(g_{t,k}),\beta)$. To further improve computational efficiency, we adopt a top-K selection strategy in place of computing explicit quantile thresholds. Specifically, for each group $g_{t,k}$ of size $N_{g}$, we adopt a QuickSelect algorithm, which is a quicksort-like partition-based method, to efficiently select the top $\lfloor \beta N_{g}\rfloor$ samples with the highest RTG values and treat them as positive examples. This avoids full sorting and reduces the time complexity from $O(n\log n)$ to $O(n)$.

The proposed pairwise ranking loss is defined as:
\begin{equation}
\begin{aligned}
\mathcal{L}_{\text{rank}} &= \sum_{t = 1}^T \sum_{g_{t,k} \in G_t} \sum_{\substack{i \in g_{t,k}^P \\ j \in g_{t,k}^N}}  \mathcal{L}_{\text{pair}}(i, j) \\
\mathcal{L}_{\text{pair}}(i, j) &= -\log \left(\sigma \left(\ell_{a_{t}^{(i)}} - \ell_{a_{t}^{(j)}} - \delta\right)\right)
\end{aligned}
\end{equation}
where $g_{t,k}^P$ and $ g_{t,k}^N$ denote the positive and negative example sets within group $g_{t,k}$, $\sigma$ is the sigmoid function,  $\delta$ is the margin hyperparameter, $\ell_{a_t^{(i)}}$ and $\ell_{a_t^{(j)}}$ are the predicted action logits for samples $i$ and $j$, respectively.

From the perspective of traditional RL, the proposed pairwise loss can be interpreted as an implicit policy improvement mechanism, conceptually similar to the policy gradient loss. This enables the TADT model to go beyond simple behavior cloning by distinguishing high-return actions. From the viewpoint of RS, the loss corresponds to a standard Bayesian Personalized Ranking (BPR) loss, which enhances the model’s ability to rank items effectively according to user preferences. Furthermore, in the context of preference learning \cite{christiano2017deep}, the pairwise loss shares conceptual similarities with the Bradley-Terry model \cite{bradley1952rank}. It leverages the RTG difference between trajectories as an implicit preference signal, guiding the model to better align its predictions with the underlying preference structure induced by RTG.

Moreover, TADT is designed to jointly predict the action $a_t$ and return $R_t$. Accordingly, we define the action prediction loss $\mathcal{L}_{a}$ and the return regression loss $\mathcal{L}_{R}$ as follows:
\begin{equation}
\begin{aligned}
\mathcal{L}_{a} &= - \sum_{i=1}^{N}\sum_{t = 1}^T \log \hat{p}(a^i_t \mid \tau_{1:t-1}, s^i_t, R^i_t) \\
\mathcal{L}_{R} &= \sum_{i=1}^{N}\sum_{t = 1}^T \|\hat{R}^i_t - R^i_t\|^2
\end{aligned}
\end{equation}
, where $a_t^i$, $s_t^i$, and $R_t^i$ denote the ground-truth action, state, and return of the $i$-th trajectory at time step $t$, respectively; $\hat{R}_t^i$ is the model's predicted return.

The overall training objective of TADT is then formulated as a multi-task loss:
\begin{equation}
\mathcal{L}_{\text{TADT}} = \mathcal{L}_a + \lambda_1 \mathcal{L}_{\text{rank}} + \lambda_2 \mathcal{L}_R 
\end{equation}
, where $\lambda_1$ and $\lambda_2$ are hyperparameters that control the relative weights of the respective loss components.

\subsection{Contrastive State Abstraction}
Due to the extremely large and sparse state space in industrial RS, we introduce a novel CSA module into the TADT model. The CSA module consists of two components: conditioned state quantization and state auxiliary networks.
\subsubsection{Conditioned State Quantization}
Specifically, we first employ a state encoder, like MLP, to project the high-dimensional state embedding $\mathbf{s}_t$ into a low-dimensional latent space and obtain the hidden representation $\mathbf{e}_t$. Inspired by VQ-VAE \cite{van2017neural} and online clustering methods \cite{ma2019learning, xie2016unsupervised}, we propose a Temporal Advantage–conditioned State Vector Quantization (TAC-SVQ) strategy.

Formally, given a codebook $\mathbf{C}$ of size $M$ in the latent space, we define the similarity between the encoder output $e_t$ and each codebook vector $\mathbf{c}_i$, conditioned on the TA score embedding $\mathbf{R}^{\text{TA}}_{t}$, as follows:
\begin{equation}
\begin{aligned}
z(\mathbf{e}_t, \mathbf{c}_i, \mathbf{R}^{\text{TA}}_{t}) &= \alpha \mathbf{c}_i^{T}\mathbf{e}_t + (1 - \alpha) \mathbf{c}_i^{T}\mathbf{R}_{t}^{\text{TA}}\\
p_{\text{sim}}(\mathbf{e}_t, \mathbf{c}_i | \mathbf{R}_{t}^{\text{TA}}) &= \frac{\text{exp}(z(\mathbf{e}_t, \mathbf{c}_i, \mathbf{R}_{t}^{\text{TA}}))}{\sum_{j = 1}^M \text{exp}(z(\mathbf{e}_t, \mathbf{c}_j, \mathbf{R}_{t}^{\text{TA}}))}
\end{aligned}
\label{eq:11}
\end{equation}
where $\alpha \in [0, 1]$ is a hyperparameter that balances the contributions of the current state and the historical TA signal. The initial TA embedding $\mathbf{R}_{0}^{\text{TA}}$ is initialized with $\mathbf{0}$.

Subsequently, the Gumbel-Softmax relaxation \cite{jang2016categorical} is applied to the similarity distribution $\mathbf{p}_{sim}$ to generate a differentiable one-hot assignment vector $\mathbf{z}_t$. The final quantized state embedding $\mathbf{c}_t$ is then obtained via:
\begin{equation}
\mathbf{c}_t = \mathbf{C} \mathbf{z}_t
\end{equation}

To alleviate the issue of codebook collapse \cite{dhariwal2020jukebox, zhang2022deep}, where only a small subset of codebook vectors are frequently used during training, we introduce a regularization loss that encourages diversity in codebook usage. Specifically, we maximize the entropy of the codebook assignment distribution:
\begin{equation}
\mathcal{L}_{\text{reg}} = \sum_{i=1}^N p_{\text{sim}}(\mathbf{e}_t, \cdot | \mathbf{R}_{t}^{\text{TA}}) \log p_{\text{sim}}(\mathbf{e}_t, \cdot | \mathbf{R}_{t}^{\text{TA}})
\end{equation}

\subsubsection{State Auxiliary Networks}
From the perspective of state abstraction theory, it is essential to preserve MDP equivalence between the original state space $\mathcal{S}$ and the compressed latent state space $\mathcal{C}$. To achieve this, we introduce two auxiliary networks: the Reward Prediction (RP) network and the Contrastive Transition Prediction (CTP) network , which not only help satisfy the theoretical constraints but also facilitate learning of expressive and effective state representations.

Formally, the RP network predicts the reward based on $\mathbf{c}_t$ and $a_t$, defined as:
\begin{equation}
\hat{r}_t = \text{MLP}(\mathbf{c}_t, a_t)
\end{equation}

For the CTP network, we adopt a contrastive learning approach inspired by SimCLR \cite{chen2020simple}, aiming to capture transition dynamics in the compressed space. Specifically, we treat the triplet $(\mathbf{c}_t, a_t, \mathbf{c}_{t+1}$) as a positive sample, while treating $(\mathbf{c}_t, a_t, \mathbf{c}'_{t+1}$) as negative samples, where $\mathbf{c}'_{t+1}$ denotes a randomly selected embedding satisfying $c'_{t+1} \neq c_{t+1}$ and $c'_{t+1} \neq c_{t}$. In practice, we sample $K_{\text{neg}}$ negative examples for each positive pair during training.

The predicted logit for the CTP network is then defined as:
\begin{equation}
z'(s_t, a_t, s_{t+1}) = \text{MLP}(\mathbf{c}_t, a_t, \mathbf{c}_{t+1})
\end{equation}

We define the corresponding loss functions for reward prediction and contrastive transition prediction as follows:
\begin{equation}
\begin{aligned}
\mathcal{L}_{r} &= \sum_{i=1}^{N}\sum_{t = 1}^T \|\hat{r}^i_t - r^i_t\|^2 \\
\mathcal{L}_{c} &= - \sum_{i=1}^{N}\sum_{t = 1}^{T-1} \log \frac{\text{exp}\left(z'(s_t, a_t, s_{t+1})\right)}{\sum_{k = 1}^{K_{\text{neg}}} \text{exp}\left(z'(s_t, a_t, s^{(k)}_{t+1})\right)}
\end{aligned}
\end{equation}
where $\mathcal{L}_r$ measures the mean squared error between predicted and ground-truth rewards, and $\mathcal{L}_c$ encourages the model to distinguish between true and false transitions using a contrastive objective.

The overall loss function of the CSA module is then formulated as a weighted combination of these objectives:
\begin{equation}
\mathcal{L}_{\text{CSA}} = \lambda_3\mathcal{L}_{r} + \lambda_4\mathcal{L}_{c} + \lambda_5\mathcal{L}_{\text{reg}}
\end{equation}
where $\lambda_3$, $\lambda_3$ and $\lambda_3$ are hyperparameters that control the relative importance of each component.

Finally, the total loss function of the proposed TADT-CSA model is given by:
\begin{equation}
\mathcal{L} = \mathcal{L}_{\text{TADT}} + \mathcal{L}_{\text{CSA}}
\end{equation}
which combines the main TADT-based policy learning objective with the auxiliary losses from the CSA module.

\subsubsection{Theoretical Analysis}
Here, we present a theoretical analysis to illustrate the rationality and effectiveness of the RP network and CTP network. The proof of the theorem is provided in the Appendix.


\begin{theorem}
\label{th:1}
Let $f_{\theta}$ denote the state abstraction function mapping states to codebook indices, $g_{\phi}$ the reward prediction network, and $h_{\psi}$ the contrastive transition prediction network, reward function $r(s_t, a_t) \in [0, 1]$. Define the maximum reward prediction error as
\begin{equation}
\varepsilon_r = \max_{s, a} \left|r(s, a) - r_{g_{\phi}}(f_{\theta}(s), a)\right|,
\end{equation}
and the maximum transition prediction error as
\begin{equation}
\varepsilon_{\mathcal{P}} = \max_{s, a, s'} \left|\mathcal{P}(s, a, s') - \mathcal{P}_{h_{\psi}}(f_{\theta}(s), a, f_{\theta}(s'))\right|.
\end{equation}
Let $\Theta = \left\{\theta, \phi, \psi\right\}$. Then for all states $s$, the value difference between the optimal policy and the abstracted policy is bounded by:
\begin{equation}
V_{\pi^{*}}(s) - V_{\pi_{\Theta}}(s) \leq  \frac{2}{(1 - \gamma)^2}\left(\varepsilon_r + \kappa I^{\frac{d+2}{2d}}|\mathcal{C}|^{-\frac{1}{d}} + \frac{\gamma \varepsilon_{\mathcal{P}}|\mathcal{C}|}{1 - \gamma}\right)
\end{equation}
where $\kappa$ is a Lipschitz constant, $I = \int p(\mathbf{e})^{\frac{d}{d+2}} d\mathbf{e}$ is the distribution concentration factor ($0 < I \leq 1$), $d$ is the embedding dimension and $|\mathcal{C}|$ is the codebook size.
\end{theorem}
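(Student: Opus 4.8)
The plan is to decompose the value gap into three sources of error—reward approximation, the quantization (state-abstraction) error, and transition approximation—and bound each separately before combining them through a simulation-lemma-style argument. First I would invoke the standard state-abstraction machinery: for an abstraction $f_\theta$ that induces an approximate-model MDP on $\mathcal{C}$ with reward error $\varepsilon_r$ and transition error (in total-variation sense) $\varepsilon_{\mathcal{P}}$, the classical bound gives $V_{\pi^*}(s)-V_{\pi_\Theta}(s)\le \frac{2}{(1-\gamma)^2}\bigl(\varepsilon_r + \frac{\gamma}{1-\gamma}\varepsilon_{\text{total}}\bigr)$, where $\varepsilon_{\text{total}}$ collects all the model mismatch. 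The $\frac{1}{(1-\gamma)^2}$ arises from one factor of horizon for the per-step-to-return conversion and another from the policy-suboptimality-to-value-loss step; I would cite the aggregation/abstraction bounds of Abel et al. or the approximate-model results of Jiang (or the simulation lemma) rather than re-deriving them.

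Second, and this is the technically novel piece, I would handle the quantization term $\kappa I^{\frac{d+2}{2d}}|\mathcal{C}|^{-1/d}$. The idea is that rounding an encoder output $\mathbf{e}\in\mathbb{R}^d$ to its nearest of $|\mathcal{C}|$ codebook vectors induces a state-representation error equal to the expected quantization distortion $\mathbb{E}_{\mathbf{e}\sim p}\|\mathbf{e}-q(\mathbf{e})\|$; assuming the value function (or the relevant reward/transition features) is $\kappa$-Lipschitz in the embedding, this distortion propagates linearly into the value bound. To bound the distortion I would appeal to high-rate (asymptotic) quantization theory—Zador's theorem—which states that for an optimal $M$-point quantizer the $L^1$ distortion scales like $C_d\,\bigl(\int p(\mathbf{e})^{d/(d+2)}\,d\mathbf{e}\bigr)^{(d+2)/d} M^{-1/d}$, i.e. $\propto I^{\frac{d+2}{2d}}|\mathcal{C}|^{-1/d}$ after taking the appropriate root and absorbing dimension-dependent constants into $\kappa$. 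The entropy regularizer $\mathcal{L}_{\text{reg}}$ is what justifies assuming the learned codebook is (near-)optimally spread, so that the Zador rate is attainable; I would state this as the operating assumption. The factor $I\le 1$ then encodes that concentrated embedding distributions are easier to quantize.

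Third I would convert the transition term: the CTP network's $\ell_\infty$ transition error $\varepsilon_{\mathcal{P}}$ must be turned into a total-variation bound over the abstract state space $\mathcal{C}$, which costs a factor of $|\mathcal{C}|$ (summing the per-transition error over at most $|\mathcal{C}|$ successor abstract states), giving the $\frac{\gamma\varepsilon_{\mathcal{P}}|\mathcal{C}|}{1-\gamma}$ contribution; plugging $\varepsilon_{\text{total}} = \frac{1-\gamma}{\gamma}\cdot\bigl(\kappa I^{\frac{d+2}{2d}}|\mathcal{C}|^{-1/d}\bigr)\cdot\frac{\gamma}{1-\gamma} + \varepsilon_{\mathcal{P}}|\mathcal{C}|$—or more cleanly, adding the reward-side quantization error directly alongside $\varepsilon_r$—and assembling the three pieces yields the stated inequality.

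I expect the main obstacle to be making the quantization step fully rigorous rather than asymptotic: Zador's theorem is a high-rate limit, so a clean finite-$|\mathcal{C}|$ statement requires either an explicit covering-number argument (partition $\mathrm{supp}(p)$ into $|\mathcal{C}|$ cells of comparable probability mass and bound each cell's diameter) or a smoothness assumption on $p$ that controls the lower-order terms. A secondary subtlety is justifying the Lipschitz constant $\kappa$: one must argue that $V$, $r_{g_\phi}$, and $\mathcal{P}_{h_\psi}$ are Lipschitz in the latent code, which follows if the decoder MLPs have bounded weights, but should be stated as a hypothesis. I would present the Zador-type estimate as a lemma with the regularity assumptions made explicit, then the rest of the proof is bookkeeping through the simulation lemma.
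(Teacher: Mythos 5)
Your proposal follows essentially the same route as the paper's proof: the paper likewise bounds the Q-value discrepancy between the original and abstracted MDP by splitting into a reward term (handled via $\varepsilon_r$ plus a Lipschitz-times-covering-radius term from Zador's theorem), a transition term (where the CTP error is summed over $|\mathcal{C}|$ successor codes to give $\frac{\gamma\varepsilon_{\mathcal{P}}|\mathcal{C}|}{1-\gamma}$), and then invokes the Abel et al.\ approximate-Q-abstraction lemma to convert the per-pair Q-gap into the $\frac{2}{(1-\gamma)^2}$ value bound. The obstacles you flag — Zador being an asymptotic statement requiring a covering-radius strengthening, and the Lipschitz constants needing to be explicit hypotheses — are exactly the points the paper also has to assume (it states the covering-radius version as a lemma under a density lower bound, and posits $\kappa_r$- and $\kappa_Q$-Lipschitzness of the reward and Q-function in embedding space), so your plan is faithful to the actual argument.
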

This theorem guarantees that the proposed CSA module preserves MDP equivalence in the compressed state space, with approximation error independent of original state space size $|\mathcal{S}|$. The error bound depends only on: (1) the fixed codebook size $|\mathcal{C}|$, (2) the prediction errors of the RP and CTP networks, and (3) the concentration of the state embeddings. The property explains the robustness of CSA in industrial recommendation scenarios, where the state space is typically extremely large and dynamically evolving. Furthermore, this theoretical guarantee is particularly well-aligned with the inductive bias of DTs, which implicitly learn policies by reconstructing high-return trajectories conditioned on future RTG signals. By preserving critical reward and transition dynamics through the RP and CTP networks, the abstracted state space retains sufficient MDP structure, thereby enabling DTs to achieve near-optimal decision-making even under compressed representations.


\section{Evaluation}
In this section, we evaluate the effectiveness of the proposed TADT-CSA methods on public offline datasets, online simulation environment and online A/B tests.

\subsection{Datasets and Baselines}
We utilize 4 public datasets in the offline evaluation experiments, including KuaiRand-Pure \cite{gao2022kuairand}, MovieLens-20M, Netflix and RetailRocket. The statistics of the above datasets is provided in the Appendix.

The compared methods include traditional offline RL methods, like CQL \cite{kumar2020conservative}, IQL \cite{kostrikov2021offline}; sequential recommendation model, like SASRec \cite{kang2018self} and BERT4Rec \cite{sun2019bert4rec} and DT-based recommendation methods, like DT4Rec \cite{zhao2023user}, CDT4Rec \cite{wang2023causal} and DT4IER \cite{liu2024sequential}.

\begin{table*}[!ht]
\centering
\footnotesize
\caption{The overall offline performance of all compared methods.}
\label{tab:vertical_comparison}
\begin{tabular}{@{}l l c c c c c c@{}}
\toprule
\textbf{Dataset} & \textbf{Metric} & 
\textbf{SASRec} & \textbf{BEART4Rec} & \textbf{DT4Rec} & \textbf{DT4IER} & \textbf{CDT4Rec} & \textbf{TADT-CSA} \\
\midrule
\multirow{6}{*}{KuaiRand-Pure} 
& Recall@1 & 0.1899 & 0.2104 & 0.1528 & 0.2272 & 0.2402 & \best{0.2704} \\
& Recall@5 & 0.5388 & 0.5313 & 0.4208 & 0.5450 & 0.5753 & \best{0.6136} \\
& Recall@10 & 0.7132 & 0.6908 & 0.5874 & 0.7026 & 0.7359 & \best{0.7679} \\
& NDCG@5 & 0.3686 & 0.3752 & 0.2893 & 0.3908 & 0.4137 & \best{0.4493} \\
& NDCG@10 & 0.4250 & 0.4269 & 0.3432 & 0.4417 & 0.4658 & \best{0.4995}\\
& MRR & 0.3501 & 0.3597 & 0.2870 & 0.3754 & 0.3952 & \best{0.4278} \\
\midrule

\multirow{6}{*}{MovieLens-20M} 
& Recall@1 & 0.1854 & 0.2511 & 0.2603 & 0.2540 & 0.2636 & \best{0.2666} \\
& Recall@5 & 0.5523 & 0.5939 & 0.5941 & 0.5642 & 0.6099 & \best{0.6223} \\
& Recall@10 & 0.7284 & 0.7350 & 0.7389 & 0.6993 & 0.7522 & \best{0.7700} \\
& NDCG@5 & 0.3741 & 0.4307 & 0.4351 & 0.4164 & 0.4441 & \best{0.4517} \\
& NDCG@10 & 0.4280 & 0.4766 & 0.4822 & 0.4602 & 0.4904 & \best{0.4996} \\
& MRR & 0.3514  & 0.4079 & 0.4144 & 0.3984 & 0.4201 & \best{0.4264} \\
\midrule

\multirow{6}{*}{Netflix} 
& Recall@1 & 0.3264 & 0.3567 & 0.3872 & 0.3978 & 0.3959  & \best{0.4007}\\
& Recall@5 & 0.6499 & 0.6406 & 0.6580 & 0.6665 & 0.6854 & \best{0.6878} \\
& Recall@10 & 0.7807 & 0.7685 & 0.7688 & 0.7726 & 0.7996 & \best{0.8033} \\
& NDCG@5 & 0.4969 & 0.4922 & 0.5304 & 0.5405 & \best{0.5514} & 0.5504 \\
& NDCG@10 & 0.5394  & 0.5337 & 0.5663 & 0.5749 & \best{0.5884} & 0.5879 \\
& MRR & 0.4747  & 0.4715 & 0.5135 & 0.5232 & \best{0.5320} & 0.5300 \\
\midrule

\multirow{6}{*}{RetailRocket} 
& Recall@1 & 0.3846 & 0.3806 & 0.3319 & 0.2764 & 0.4088 & \best{0.4430} \\
& Recall@5 & 0.4401 & 0.4695 & 0.4031 & 0.3405 & 0.4758 & \best{0.5157} \\
& Recall@10 & 0.4558 & 0.4932 & 0.4330 & 0.3903 & 0.5057 & \best{0.5427} \\
& NDCG@5 & 0.4152 & 0.4296 & 0.3707 & 0.3096 & 0.4449 & \best{0.4832} \\
& NDCG@10 & 0.4203 & 0.4374 & 0.3804 & 0.3258 & 0.4542 & \best{0.4922} \\
& MRR & 0.4187 & 0.4302 & 0.3766 & 0.3217 & 0.4514 & \best{0.4867} \\
\bottomrule
\end{tabular}
\label{table:2}
\end{table*}

\subsection{Implementation Details}
In the offline experiments, all datasets are preprocessed to construct trajectories of length 30 for each evaluated method. Each state is represented as a 20-dimensional observation vector. For TADT-CSA model, we set the learning rate to 5e-3, the optimizer to Adam, the batch size to 128, the number of training epochs to 50, the codebook size to 64, and the hidden embedding dimension to 64. The implementation of TADT-CSA is based on the PyTorch framework.

\subsection{Offline Evaluation}
For offline experiments, we utilize Recall@K, NDCG@K and MRR as the evaluation metrics to assess prediction accuracy of all compared methods. The detailed explanation of these metrics can be found in Appendix. The results of the offline evaluation are presented in Table \ref{table:2}. Compared to traditional sequential recommendation methods and DT-based approaches, the proposed TADT-CSA method consistently achieves the highest or near-highest scores in Recall, NDCG, and MRR across all four datasets. These results demonstrate the effectiveness and superiority of TADT-CSA over the baseline methods.

\subsection{Online Simulation}
We utilize VirtualTaobao \cite{shi2019virtual} as the online simulation environment to evaluate the performance of all compared methods. First, a DDPG model is trained within this environment and then used as an expert agent to collect trajectories. These trajectories serve to pre-train the offline RL and DT-based methods. Subsequently, all methods are fine-tuned through interaction with the simulation environment. During the fine-tuning stage, we compare the CTR achieved by each method. The results of the online simulation is shown in Fig \ref{fig:2}. This indicates that, compared to existing DT-based and traditional reinforcement learning methods, the proposed TADT-CSA achieves a higher CTR during online fine-tuning, demonstrating its ability to effectively improve policies through interactions with the environment.
\begin{figure}[htbp]
\centering
\includegraphics[width=0.45 \textwidth]{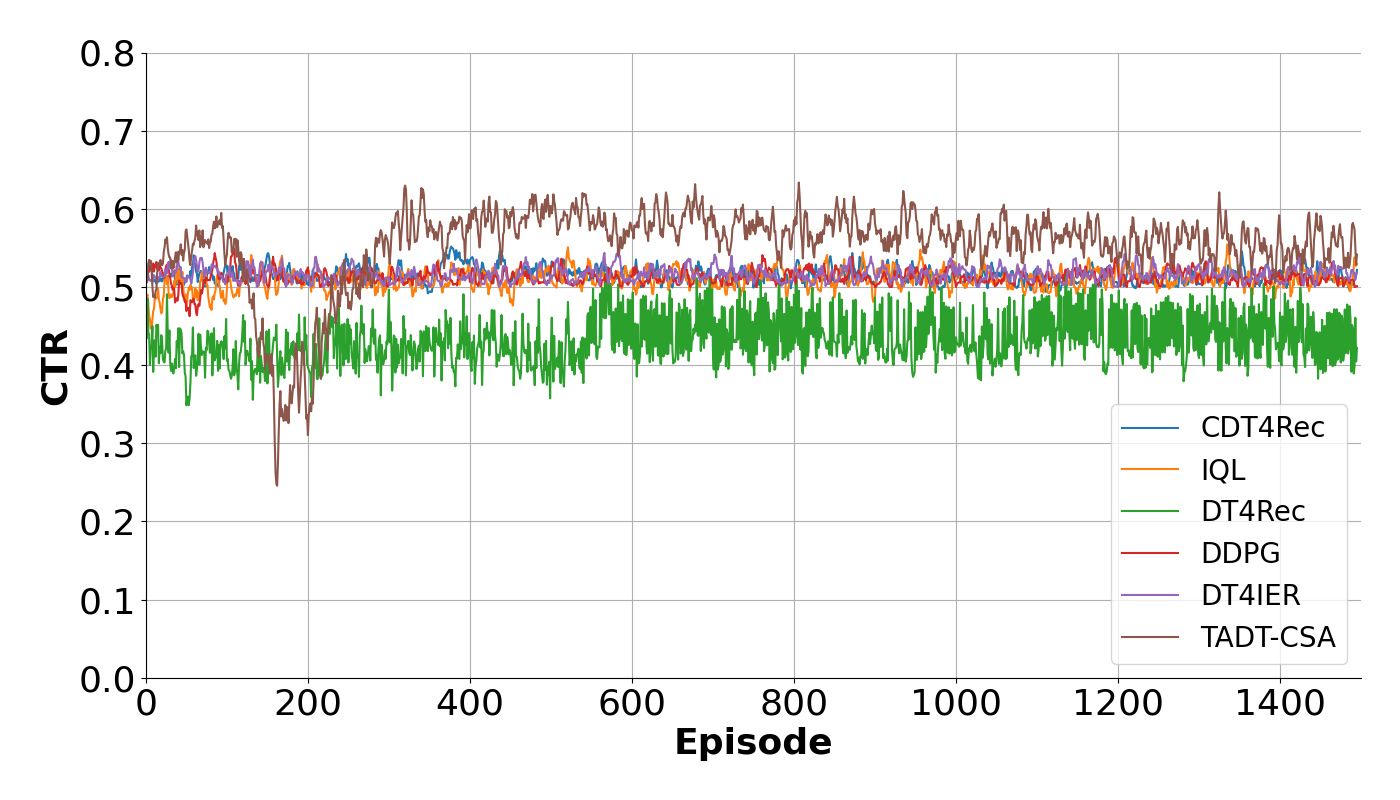}
\caption{Online Simulation result of all compared methods.}
\label{fig:2}
\end{figure}

\subsection{Ablation Study}
To further analyze the superiority of TADT-CSA, we evaluate the offline performance of TADT-CSA and its variants by progressively removing modules, features, or loss functions from the model. The detailed information of all variants is provided in the Appendix. The results of the ablation study are presented in Table \ref{table:3}. TADT-CSA outperforms all its variants in the offline experiments, demonstrating the effectiveness of both the CSA module and the TA signal. Furthermore, we observe that TADT-CSA (w/o TAC \& CTP \& RP) achieves the poorest performance, indicating that the RP and CTP networks play a crucial role in guiding the learning of the codebook. Relying solely on the DT loss for codebook updates may lead to suboptimal state representations.
\begin{table*}[!ht]
\centering
\caption{The offline performance of TADT-CSA and its variants on the KuaiRand-Pure dataset.}
\label{tab:ablation}
\footnotesize
\begin{tabular}{@{}lcccccc@{}}
\toprule
\textbf{Model} & \textbf{Recall@1} & \textbf{Recall@5} & \textbf{Recall@10} & \textbf{MRR} & \textbf{NDCG@5} & \textbf{NDCG@10} \\
\midrule
TADT-CSA & \textbf{0.2704} & 0.6136 & 0.7679 & \textbf{0.4278} & \textbf{0.4493} & \textbf{0.4995} \\
\midrule
TADT-CSA(w/o TAC) & 0.2566 & 0.5940 & 0.7478 & 0.4113 & 0.4316 & 0.4815 \\
TADT-CSA(w/o TAC \& CTP) & 0.2643 & \textbf{0.6155} & \textbf{0.7688} & 0.4237 & 0.4468 & 0.4966 \\
TADT-CSA(w/o TAC \& CTP \& RP) & 0.0893 & 0.2992 & 0.4766 & 0.2061 & 0.1939 & 0.2508 \\
TADT-CSA(w/o CSA) & 0.2578 & 0.6054 & 0.7594 & 0.4163 & 0.4385 & 0.4885 \\
TADT-CSA(w/o CSA \& TA) & 0.2438 & 0.5732 & 0.7313 & 0.3962 & 0.4143 & 0.4656 \\
\bottomrule
\end{tabular}
\label{table:3}
\end{table*}

\subsection{Parameter Sensitivity}
We evaluate the parameter sensitivity of TADT-CSA on the KuaiRand dataset and find that a codebook size of $64$ and $\delta = 0.3$ yield the best performance, as shown in Figure~\ref{fig:3}. Additional parameter sensitivity analyses are provided in the Appendix.
\begin{figure}[htbp]
\centering
\includegraphics[width=0.45 \textwidth]{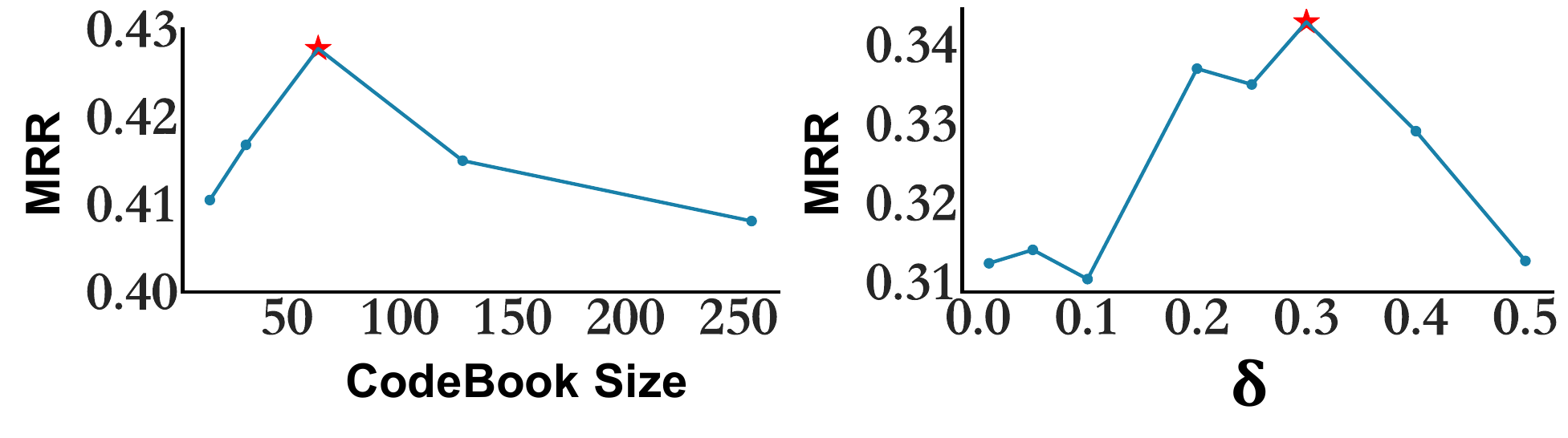}
\caption{Parameter sensitivity evaluation of TADT-CSA.}
\label{fig:3}
\end{figure}

\subsection{Online A/B Experiments}
We deploy TADT-CSA in the ranking stage of the live stream recommendation system on Kwai, a short video platform with over 100 million users. An online A/B test is conducted over a 5-day period in March 2025 to evaluate the performance of TADT-CSA against baseline methods. Specifically, 20\% of Kwai users are randomly selected as the experimental group. The baseline methods include SAC and SAC-CSA, where SAC-CSA is an enhanced version of SAC equipped with the CSA module. In the online experiments, SAC-CSA improves live stream watch time by 10.322\% and average watch time by 12.947\% over SAC. Furthermore, TADT-CSA achieves an additional 2.830\% increase in live stream watch time and a 15.307\% improvement in average watch time compared to SAC-CSA. These results highlight the effectiveness of the CSA module in enhancing representation expressiveness, as well as the capability of the TADT framework in capturing long-term user preferences.

\section{Related Work}
\subsection{Decision Transformer} The Decision Transformer (DT) is a return-conditioned supervised learning(RCSL) method \cite{brandfonbrener2022does}, which has been widely adopted in offline RL and generative recommendation systems. However, DT essentially performs behavior cloning on offline datasets and is not capable of learning optimal policies in the same way as dynamic programming (DP)-based offline RL methods such as CQL \cite{kumar2020conservative} and IQL \cite{kostrikov2021offline}.  Some recent works, including QDT \cite{yamagata2023q} and ACT \cite{gao2024act}, incorporate DP-based offline RL techniques to estimate optimal Q-values or advantage values. These values are then used to relabel or replace the original RTG signals. However, these approaches typically follow a two-stage training paradigm for DT, which can be both complex and time-consuming in practice. Another approach, QT \cite{hu2024q}, introduces a Q-learning loss to complement the behavior cloning objective of DT. However, QT requires training an additional value function separately, increasing model complexity and training overhead.

\subsection{Generative Recommendation} Compared with traditional recommendation methods that rely on user-item scoring and ranking, generative recommendation (GenRec) leverages sequence modeling or generative models—such as large language models (LLMs)—to perform conditional item generation \cite{ji2024genrec}. The Decision Transformer (DT) is a lightweight generative model that has been widely applied in RS, including DT4Rec \cite{zhao2023user}, CDT4Rec \cite{wang2023causal}, MaskRDT \cite{wang2024retentive}, EDT4Rec \cite{chen2024maximum}, and DT4IER \cite{liu2024sequential}. However, these methods often neglect the importance of state representation learning, relying instead on simple embedding layers or encoders to obtain state embeddings. This may lead to suboptimal performance in industrial RS scenarios, where accurate modeling of user states is crucial. 

\section{Conclusion}
In this paper, we propose a novel Temporal Advantage Decision Transformer with Contrastive State Abstraction (TADT-CSA) framework for generative recommendation through long-sequence modeling. The proposed method effectively addresses the challenge that traditional TD error-based RL agents struggle to learn accurate Q-value estimates in highly noisy, stochastic and dynamically fluctuating environments. Furthermore, we introduce the Contrastive State Abstraction (CSA) module into the TADT framework to enable more effective and expressive state representation learning, particularly under the extremely large and sparse state spaces in industrial RS. Compared with existing DT-based approaches, we design a pairwise ranking loss that prevents the model from merely performing behavior cloning and encourages policy improvement based on RTG signals. Extensive experimental results demonstrate the effectiveness and superiority of the TADT-CSA model. Currently, the TADT-CSA framework has been successfully deployed in real-world industrial RS scenarios, where it consistently outperforms existing RL methods based on TD learning, such as SAC \cite{haarnoja2018soft}, in both performance and stability.

\bibliography{aaai2026}

\section{Reproducibility Checklist}

\begin{enumerate}
    \item This paper:
    \begin{itemize}
     \item Includes a conceptual outline and/or pseudocode description of AI methods introduced (\underline{yes}/partial/no/NA)
     \item Clearly delineates statements that are opinions, hypothesis, and speculation from objective facts and results (\highlight{yes}/no)
     \item Provides well marked pedagogical references for less-familiare readers to gain background necessary to replicate the paper (\highlight{yes}/no)
    \end{itemize}

    \item Does this paper make theoretical contributions? (\highlight{yes}/no) \\
    If yes, please complete the list below:
    \begin{itemize}
        \item All assumptions and restrictions are stated clearly and formally. (\highlight{yes}/partial/no)
        \item All novel claims are stated formally (e.g., in theorem statements). (\highlight{yes}/partial/no)
        \item Proofs of all novel claims are included. (\highlight{yes}/partial/no)
        \item Proof sketches or intuitions are given for complex and/or novel results. (\highlight{yes}/partial/no)
        \item Appropriate citations to theoretical tools used are given. (\highlight{yes}/partial/no)
        \item All theoretical claims are demonstrated empirically to hold. (\highlight{yes}/partial/no/NA)
        \item All experimental code used to eliminate or disprove claims is included. (\highlight{yes}/no/NA)
    \end{itemize}

    \item Does this paper rely on one or more datasets? (\highlight{yes}/no) \\
    If yes, please complete the list below:
    \begin{itemize}
        \item A motivation is given for why the experiments are conducted on the selected datasets (\highlight{yes}/partial/no/NA)
        \item All novel datasets introduced in this paper are included in a data appendix. (\highlight{yes}/partial/no/NA)
        \item All novel datasets introduced in this paper will be made publicly available upon publication of the paper with a license that allows free usage for research purposes. (yes/partial/no/\highlight{NA})
        \item All datasets drawn from the existing literature (potentially including authors’ own previously published work) are accompanied by appropriate citations. (\highlight{yes}/no/NA)
        \item All datasets drawn from the existing literature (potentially including authors’ own previously published work) are publicly available. (\highlight{yes}/partial/no/NA)
        \item All datasets that are not publicly available are described in detail, with explanation why publicly available alternatives are not scientifically satisficing. (yes/partial/no/\highlight{NA})
    \end{itemize}

    \item Does this paper include computational experiments? (\highlight{yes}/no) \\
    If yes, please complete the list below:
    \begin{itemize}
        \item This paper states the number and range of values tried per (hyper-) parameter during development of the paper, along with the criterion used for selecting the final parameter setting. (\highlight{yes}/partial/no/NA)
        \item Any code required for pre-processing data is included in the appendix. (\highlight{yes}/partial/no).
        \item All source code required for conducting and analyzing the experiments is included in a code appendix. (\highlight{yes}/partial/no)
        \item All source code required for conducting and analyzing the experiments will be made publicly available upon publication of the paper with a license that allows free usage for research purposes. (\highlight{yes}/partial/no)
        \item All source code implementing new methods have comments detailing the implementation, with references to the paper where each step comes from (\highlight{yes}/partial/no)
        \item If an algorithm depends on randomness, then the method used for setting seeds is described in a way sufficient to allow replication of results. (yes/partial/no/\highlight{NA})
        \item This paper specifies the computing infrastructure used for running experiments (hardware and software), including GPU/CPU models; amount of memory; operating system; names and versions of relevant software libraries and frameworks. (\highlight{yes}/partial/no)
        \item This paper formally describes evaluation metrics used and explains the motivation for choosing these metrics. (\highlight{yes}/partial/no)
        \item This paper states the number of algorithm runs used to compute each reported result. (\highlight{yes}/no)
        \item Analysis of experiments goes beyond single-dimensional summaries of performance (e.g., average; median) to include measures of variation, confidence, or other distributional information. (yes/\highlight{no})
        \item The significance of any improvement or decrease in performance is judged using appropriate statistical tests (e.g., Wilcoxon signed-rank). (yes/partial/\highlight{no})
        \item This paper lists all final (hyper-)parameters used for each model/algorithm in the paper’s experiments. (\highlight{yes}/partial/no/NA)
    \end{itemize}
\end{enumerate}

\newpage
\onecolumn
\appendix
\section{Appendix}
\subsection{Proofs of Theorem 1}

\textbf{\textit{Proof}}. We first formalize the relationship between the original MDP $\mathcal{M} = (\mathcal{S}, \mathcal{A}, \mathcal{P}, \mathcal{R}, \gamma)$ and the abstracted MDP $\mathcal{M}_{\mathcal{C}} = (\mathcal{C}, \mathcal{A}, \mathcal{P}_{\mathcal{C}}, \mathcal{R}_{\mathcal{C}}, \gamma)$ where $|\mathcal{C}|$ is fixed.

\begin{lemma}
Suppose the state embedding distribution p(e) has compact support and satisfies $p(\mathbf{e}) \geq \rho_{\min} >0$ on its support. Define $I=\int p(\mathbf{e})^{\frac{d}{d+2}} d\mathbf{e}$. Then there exists a codebook $C$ with $|C|=N$ such that the covering radius satisfies:
\begin{equation}
\max_{\mathbf{e}} \min_{\mathbf{c}_k \in \mathcal{C}} |\mathbf{e} - \mathbf{c}_k| \leq \kappa' \cdot I^{\frac{d+2}{2d}} \cdot |\mathcal{C}|^{-\frac{1}{d}}
\end{equation}
where $\kappa'$ depends on $d$, $\rho_{\min}$, and the diameter of the embedding space.
\end{lemma}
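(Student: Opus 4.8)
This is a high-resolution vector-quantization estimate, so the plan is to construct the codebook non-uniformly rather than as a uniform grid. First I would set $S = \mathrm{supp}(p)$ (compact by hypothesis) and introduce the normalized point-density $\lambda(\mathbf{e}) = p(\mathbf{e})^{d/(d+2)}/I$, which integrates to $1$ since $I = \int_S p^{d/(d+2)}$; this is the Zador-optimal allocation for squared-error quantization, and the point of using it is that the resulting cells shrink fastest exactly where the density is large. I would then partition $S$ into $N$ cells $R_1,\dots,R_N$ of equal $\lambda$-mass, $\int_{R_k}\lambda = 1/N$, produced by a recursive dyadic-type bisection of $S$ that keeps every cell of bounded aspect ratio, and take each $\mathbf{c}_k$ to be (a point near) the center of $R_k$, so that the covering radius is at most $\max_k \mathrm{diam}(R_k)$.

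The core estimate is then short. On $S$ we have $p \ge \rho_{\min}$ and the exponent $d/(d+2)$ is positive, so $p^{d/(d+2)} \ge \rho_{\min}^{d/(d+2)}$, whence
\begin{equation}
\mathrm{Leb}(R_k) \;\le\; \rho_{\min}^{-d/(d+2)}\int_{R_k} p(\mathbf{e})^{d/(d+2)}\,d\mathbf{e} \;=\; \rho_{\min}^{-d/(d+2)}\, I \int_{R_k}\lambda \;=\; \frac{I\,\rho_{\min}^{-d/(d+2)}}{N}.
\end{equation}
Bounded aspect ratio gives $\mathrm{diam}(R_k) \le C_d\,\mathrm{Leb}(R_k)^{1/d} \le C_d\, I^{1/d}\rho_{\min}^{-1/(d+2)}N^{-1/d}$, so the covering radius is $O\!\big(I^{1/d}\rho_{\min}^{-1/(d+2)}N^{-1/d}\big)$. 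To convert the $I^{1/d}$ into the stated $I^{(d+2)/(2d)}$ I would invoke the remaining hypotheses: since the embedding space has bounded diameter and the support is non-degenerate (contains a ball of some fixed radius $\rho_0$), $I = \int_S p^{d/(d+2)} \ge \omega_d\,\rho_0^{d}\,\rho_{\min}^{d/(d+2)} =: c_0 > 0$, and because $I^{1/d} = I^{(d+2)/(2d)}\,I^{-1/2}$ this lets me write $I^{1/d} \le c_0^{-1/2} I^{(d+2)/(2d)}$; absorbing $C_d\,c_0^{-1/2}\,\rho_{\min}^{-1/(d+2)}$ into $\kappa'$ gives precisely the claimed bound, with $\kappa'$ depending only on $d$, $\rho_{\min}$, and the diameter.

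The hard part will be the geometry of the partition, i.e. producing cells that are simultaneously $\lambda$-balanced and geometrically regular (bounded aspect ratio), and handling cells clipped by $\partial S$ -- this is what turns an average-distortion statement, which high-resolution theory supplies most naturally, into the worst-case covering radius that appears here (and that later controls the Lipschitz reward-approximation term $\kappa\,I^{(d+2)/(2d)}|\mathcal{C}|^{-1/d}$ in Theorem~\ref{th:1}). If the regular-partition bookkeeping becomes unwieldy, a legitimate fallback is to cover $S$ by a uniform cubical grid of side $h \asymp \mathrm{diam}(S)\,N^{-1/d}$: this already yields the $N^{-1/d}$ rate with $\kappa'$ depending on $d$ and the diameter, at the price of losing the $I$-dependent refinement.
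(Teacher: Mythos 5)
Your proposal is correct in substance but takes a genuinely different route from the paper. The paper's proof is essentially a two-line citation: it invokes Zador's asymptotic formula to get the expected quantization distortion $\propto I^{\frac{d+2}{2d}}|\mathcal{C}|^{-\frac{1}{d}}$ and then appeals to Graf--Luschgy for the fact that, when $p \geq \rho_{\min}$ on a compact support, the worst-case covering radius has the same order as the average distortion. You instead construct the codebook explicitly: an equal-mass partition under the Zador-optimal point density $\lambda \propto p^{d/(d+2)}$, a Lebesgue-volume bound on each cell via $p \geq \rho_{\min}$, and a diameter bound from geometric regularity. Your route is more self-contained and makes the role of $\rho_{\min}$ transparent (it is exactly what converts a mass bound into a volume bound), whereas the paper's route inherits the average-to-worst-case passage from the literature without exhibiting the codebook. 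Your algebra checks out, including the identity $I^{1/d} = I^{\frac{d+2}{2d}} I^{-1/2}$.

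Two caveats. First, the step $I \geq c_0 > 0$ does not follow from the lemma's stated hypotheses alone: a density that is a tall spike on a tiny support satisfies compactness and $p \geq \rho_{\min}$ (with large $\rho_{\min}$) yet has $I$ arbitrarily small, so your ``support contains a ball of radius $\rho_0$'' assumption is a genuine addition. It is mild, and without it your unconverted bound $C_d\, I^{1/d}\rho_{\min}^{-1/(d+2)} N^{-1/d}$ is the honest statement (note it is \emph{weaker} than the target when $I \leq 1$, so the conversion is not optional). Second, the equal-$\lambda$-mass partition with uniformly bounded aspect ratio is the real technical content you defer: mass-bisection of a general compact $S$ under an unbounded-above density does not automatically keep cells fat, and $\mathrm{diam}(R_k) \leq C_d\,\mathrm{Leb}(R_k)^{1/d}$ fails for thin cells. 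That said, since the lemma permits $\kappa'$ to depend on $d$, $\rho_{\min}$, and the diameter, your uniform-grid fallback combined with the lower bound $I \geq c_0$ already yields the stated inequality (with the $I$-dependence absorbed into $\kappa'$), so the lemma as written is provable by your elementary argument; only the sharper, genuinely $I$-adaptive version requires the regular-partition bookkeeping or the Zador/Graf--Luschgy machinery the paper cites.
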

\begin{proof}
By Zador's asymptotic formula \cite{zador1982asymptotic}, there exists a constant $C_d$ and $N_0$ such that for all $N \geq N_0$, there exists a codebook $\mathcal{C}$ satisfying:
$$
\mathbb{E}[\min_k \|\mathbf{e} - \mathbf{c}_k\|^2] \leq C_d \cdot I^{\frac{d+2}{2d}} \cdot |\mathcal{C}|^{-\frac{1}{d}}.
$$
Under the assumption that $p(\mathbf{e}) \geq \rho_{\min} > 0$ on its compact support, \cite{graf2000foundations} show that this implies the covering radius bound stated in the lemma.
\end{proof}

Now consider the Q-value difference for any $s_t$ with abstract representation $c_t = f_\theta(s_t)$:
\begin{equation}
\begin{aligned}
\Delta Q(c_t, a_t) = \left|Q_{\mathcal{S}}(s_t, a_t) - Q_{\mathcal{C}}(c_t, a_t) \right|
\end{aligned}
\end{equation}

Applying the Bellman equation expansion yields:
\begin{equation}
\begin{aligned}
\Delta Q &\leq \underbrace{|r(s_t,a_t) - \mathcal{R}_{\mathcal{C}}(c_t,a_t)|}_{\text{(a)}} + \gamma \underbrace{\left| \sum_{s_{t+1}} \mathcal{P}(s_{t+1}|s_t,a_t) \max Q_{\mathcal{S}}(s_{t+1}) - \sum_{c_{t+1}} \mathcal{P}_{\mathcal{C}}(c_{t+1}|c_t,a_t) \max Q_{\mathcal{C}}(c_{t+1}) \right|}_{\text{(b)}}
\end{aligned}
\end{equation}

\textbf{Bounding term (a)}:

Let $\hat{r}_t = r_g(f_{\theta}(s), a)$ be the predicted reward value by RP network. Then
\begin{equation}
\begin{aligned}
|r(s_t,a_t) - \mathcal{R}_{\mathcal{C}}(c_t,a_t)| &\leq |r(s_t,a_t) - \hat{r}_t| + |\hat{r}_t - \mathcal{R}_{\mathcal{C}}(c_t,a_t)| \
\end{aligned}
\end{equation}

By the definition of $\varepsilon_r$ in Theorem 1, we obtain:
\begin{equation}
|r(s_t,a_t) - \hat{r}_t| = |r(s_t,a_t) - r_g(f_{\theta}(s), a)| \leq \varepsilon_{r}
\end{equation}
Assuming the reward function is $\kappa_r$-Lipschitz continuous with respect to the embedding space:
\begin{equation}
\left|r(s_i, a_t) - r(s_j, a_t)\right| \leq \kappa_{r}\|\mathbf{e}_i - \mathbf{e}_j\|
\end{equation}
we derive:
\begin{equation}
\begin{aligned}
\left|r(s_i, a_t) - r(s_j, a_t)\right| &\leq \kappa_{r}\left(\|\mathbf{e}_i - \mathbf{c}_k\| + \|\mathbf{e}_j - \mathbf{c}_k\|\right) \\
& \leq 2\kappa_{r}\kappa' I^{\frac{d+2}{2d}} \cdot |\mathcal{C}|^{-\frac{1}{d}}
\end{aligned}
\end{equation}
Setting $\kappa_1 = 2\kappa_r\kappa'$, we obtain:
\begin{equation}
\begin{aligned}
\left|r(s_i, a_t) - r(s_j, a_t)\right| \leq \kappa_1 I^{\frac{d+2}{2d}} |\mathcal{C}|^{-\frac{1}{d}}
\end{aligned}
\end{equation}

Consequently:
\begin{equation}
\label{eq:9}
\begin{aligned}
(a) \leq \varepsilon_r + \kappa_1 I^{\frac{d+2}{2d}} |\mathcal{C}|^{-\frac{1}{d}}
\end{aligned}
\end{equation}

\textbf{Bounding term (b)}:

Applying the triangle inequality, we decompose:

\begin{equation}
\begin{aligned}
\text{(b)} & \underbrace{\leq \left| \sum_{s_{t+1}} \mathcal{P}(s_{t+1}|s_t,a_t) \max Q_{\mathcal{S}}(s_{t+1}) - \sum_{c_{t+1}} \hat{\mathcal{P}}(c_{t+1}|c_t,a_t) \max Q_{\mathcal{C}}(c_{t+1}) \right|}_{(b_1)} \\
&+ \underbrace{\left| \sum_{c_{t+1}} \left(\hat{\mathcal{P}}(c_{t+1}|c_t,a_t) - \mathcal{P}_{\mathcal{C}}(c_{t+1}|c_t,a_t)\right) \max Q_{\mathcal{C}}(c_{t+1}) \right|}_{(b_2)}
\end{aligned}
\end{equation}
where term $b_2$ relates to the CTP network's transition prediction error $\varepsilon_{\mathcal{T}}$. Specifically:

\begin{equation}
\begin{aligned}
(b_2) &\leq \sum_{c_{t+1}}\frac{\varepsilon_{\mathcal{P}}R_{max}}{1 - \gamma} \leq \frac{\varepsilon_{\mathcal{P}}|\mathcal{C}|}{1 - \gamma}
\end{aligned}
\end{equation}

For term $(b_1)$, we assume:
\begin{equation}
\hat{\mathcal{P}}(c_{t+1} | c_t, a_t) \approx \mathcal{P}(c_{t+1} | c_t, a_t) = \sum_{s' \in f^{-1}(c_{t+1})} \mathcal{P}(s' | s_t, a_t)
\end{equation}

Assuming the Q-function is $\kappa_{Q}$-Lipschitz continuous with respect to the embedding space:
\begin{equation}
\left|Q(s_i, a_t) - Q(s_j, a_t)\right| \leq \kappa_{Q}\|\mathbf{e}_i - \mathbf{e}_j\|
\end{equation}
we obtain:
\begin{equation}
\begin{aligned}
&\left| \sum_{s_{t+1}} \mathcal{P} \max Q_{\mathcal{S}} - \sum_{c_{t+1}} \hat{\mathcal{P}} \max Q_{\mathcal{C}} \right| \\
&\approx \left| \sum_{c_{t+1}} \sum_{s_{t+1} \in f^{-1}(c_{t+1})} \mathcal{P}(s_{t+1}|s_t,a_t) \left( \max Q_{\mathcal{S}}(s_{t+1}) - \max Q_{\mathcal{C}}(c_{t+1}) \right) \right| \\
&\leq \sum_{c_{t+1}} \sum_{s_{t+1} \in f^{-1}(c_{t+1})} \mathcal{P}(s_{t+1}|s_t,a_t) \left| \max Q_{\mathcal{S}}(s_{t+1}) - \max Q_{\mathcal{C}}(c_{t+1}) \right| \\
&\leq \sum_{c_{t+1}} \sum_{s_{t+1} \in f^{-1}(c_{t+1})} \mathcal{P}(s_{t+1}|s_t,a_t) \cdot \kappa_Q \cdot \| \mathbf{e}_{s_{t+1}} - \mathbf{c}_{c_{t+1}} \| \\
&\leq \kappa_Q \cdot \left( \kappa' I^{-\frac{d+2}{2d}} |\mathcal{C}|^{-\frac{1}{d}} \right)
\end{aligned}
\end{equation}

Defining $\kappa_2 = \kappa_{Q}\kappa'$, we establish:
\begin{equation}
\label{eq:14}
(b) \leq \frac{\varepsilon_{\mathcal{P}}|\mathcal{C}|}{1 - \gamma} + \kappa_2 I^{\frac{d+2}{2d}} |\mathcal{C}|^{-\frac{1}{d}}
\end{equation}

Combining the bounds from Equations \ref{eq:9} and \ref{eq:14} through recursive application of the Bellman equation:
\begin{equation}
\begin{aligned}
\Delta Q &\leq \varepsilon_r + \kappa_1 I^{\frac{d+2}{2d}} |\mathcal{C}|^{-\frac{1}{d}} + \gamma\left(\frac{\varepsilon_{\mathcal{P}}|\mathcal{C}|}{1 - \gamma} + \kappa_2 I^{\frac{d+2}{2d}} |\mathcal{C}|^{-\frac{1}{d}}\right) \\
&\leq \varepsilon_r + (\kappa_1 + \gamma\kappa_2) I^{\frac{d+2}{2d}}|\mathcal{C}|^{-\frac{1}{d}} + \frac{\gamma \varepsilon_{\mathcal{P}}|\mathcal{C}|}{1 - \gamma}
\end{aligned}
\end{equation}

Setting $\kappa = \kappa_1 + \gamma\kappa_2$, we arrive at:

\begin{equation}
\label{eq:17}
\Delta Q \leq \varepsilon_r + \kappa I^{\frac{d+2}{2d}}|\mathcal{C}|^{-\frac{1}{d}} + \frac{\gamma \varepsilon_{\mathcal{P}}|\mathcal{C}|}{1 - \gamma}
\end{equation}

We now restate a fundamental result from \cite{abel2016near}:
\begin{lemma}
\label{lm:2}
Let $f$ be a state abstraction function, if $f_{\varepsilon}(s_1) = f_{\varepsilon}(s_2) \rightarrow \forall_a \left|Q(s_1, a) - Q(s_2, a)\right| < \varepsilon$, then $\forall_s V_{\pi^*}(s) - V_{\pi_{\Theta}}(f(s)) \leq \frac{2\varepsilon}{(1 - \gamma)^2}$.
\end{lemma}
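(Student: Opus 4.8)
The plan is to follow the classical approximate-$Q^*$ abstraction argument of \cite{abel2016near}, reading $\pi_\Theta$ as the optimal policy of the abstract MDP lifted to the ground space by $\pi_\Theta(s)=\pi^*_{\mathcal{C}}(f(s))$, and reading $V_{\pi_\Theta}(f(s))$ as the ground value of executing this lifted policy. To make the abstract object concrete I would first attach to $f$ an abstract MDP $\mathcal{M}_{\mathcal{C}}$ whose reward and transition kernels are, on each cluster $f^{-1}(c)$, fixed convex combinations of the corresponding ground quantities (any weighting supported on the cluster works). The value loss is then decomposed into three cascading bounds whose geometric factors multiply to the $(1-\gamma)^{-2}$ scaling stated in the lemma.

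First I would prove a $Q$-value transfer bound, $\max_{s,a}\lvert Q(s,a)-Q^*_{\mathcal{C}}(f(s),a)\rvert\le \tfrac{\varepsilon}{1-\gamma}$, where $Q$ is the ground optimal action-value and $Q^*_{\mathcal{C}}$ the optimal value on $\mathcal{M}_{\mathcal{C}}$. The hypothesis gives that any two ground states in a common cluster have $Q$-values within $\varepsilon$ for every action, and since the max is $1$-Lipschitz in the coordinatewise sup-norm this forces their $V$-values within $\varepsilon$ as well. Consequently one application of the abstract Bellman backup perturbs each entry by at most the within-cluster variation $\varepsilon$; because the Bellman operator is a $\gamma$-contraction, summing the discounted per-backup errors as a geometric series yields the $\tfrac{\varepsilon}{1-\gamma}$ fixed-point gap.

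Second I would convert this into a near-greedy guarantee. Fixing $s$ and letting $a=\pi_\Theta(s)$ be the $Q^*_{\mathcal{C}}$-greedy action at $f(s)$, a triangle inequality that chains the transfer bound twice — once at the chosen action $a$ and once at the ground-optimal action — gives $Q(s,a)\ge V_{\pi^*}(s)-\tfrac{2\varepsilon}{1-\gamma}$, so $\pi_\Theta$ is everywhere $\eta$-greedy with $\eta=\tfrac{2\varepsilon}{1-\gamma}$; the factor of two is precisely the two invocations of Step 1. Finally I would invoke the standard fact, proved by a one-line telescoping of the performance-difference identity, that any policy acting within $\eta$ of optimal at every state loses at most $\tfrac{\eta}{1-\gamma}$ in value. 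Substituting $\eta=\tfrac{2\varepsilon}{1-\gamma}$ gives $V_{\pi^*}(s)-V_{\pi_\Theta}(s)\le\tfrac{2\varepsilon}{(1-\gamma)^2}$ for all $s$, which is the claim.

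The hard part will be Step 1: the per-backup error must be controlled against the \emph{abstract} kernels, which live on $\mathcal{C}$ rather than $\mathcal{S}$, so the induction has to compare value iterates across two different state spaces and verify that the cluster-averaging defining $\mathcal{M}_{\mathcal{C}}$ introduces no more than the within-cluster $Q$-variation $\varepsilon$ at each step. Once that lemma is in place, Steps 2 and 3 are routine bookkeeping with the triangle inequality and a single geometric series.
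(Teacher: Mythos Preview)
The paper does not prove this lemma at all: it is introduced with ``We now restate a fundamental result from \cite{abel2016near}'' and then used as a black box in the proof of Theorem~\ref{th:1}. Your proposal correctly reconstructs the standard three-step argument from that reference (the $Q^*_\varepsilon$-abstraction bound), so there is nothing to compare against in the paper itself --- your outline \emph{is} the proof the paper defers to.
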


Applying Equation \ref{eq:17} in conjunction with Lemma \ref{lm:2}, we derive the final performance bound:
\begin{equation}
\begin{aligned}
V_{\pi^*}(s) - V_{\pi_{\Theta}}(s) &\leq \frac{2}{(1 - \gamma)^2}\left(\varepsilon_r + \kappa I^{\frac{d+2}{2d}}|\mathcal{C}|^{-\frac{1}{d}} + \frac{\gamma \varepsilon_{\mathcal{P}}|\mathcal{C}|}{1 - \gamma}\right)
\end{aligned}
\end{equation}

\subsection{Dataset Description}
We utilize 4 public datasets in the offline evaluation experiments, including KuaiRand-Pure, MovieLens-20M, Netflix and RetailRocket. The statistics of the above datasets are provided as follows:

\begin{table}[H]
\centering
\begin{tabular}{cccc}
\hline
\rule{0pt}{2.4ex}Methods & \#Users & \#Items & \#Interactions \\
\hline
\rule{0pt}{2.4ex}KuaiRand-Pure & 27,285 & 7,551 & 1,436,609 \\
MovieLens-20M   & 138,493 & 27,278 & 20,000,263 \\
Netflix  & 480,189 & 17,770 & 100,498,277 \\RetailRocket & 1,407,580 & 235060 & 2,756,101 \\
\hline
\end{tabular}
\caption{Dataset Statistics in the offline experiments.}
\label{table:1}
\end{table}

\subsection{Evaluation metrics}
To comprehensively evaluate the performance of recommendation systems, we employ the following industry-standard metrics across all experiments:

\begin{itemize}
    \item \textbf{Recall@K}: Measures the proportion of relevant items found in the top-K recommendations:$$\text{Recall@}K = \frac{|\{\text{relevant items}\} \cap \{\text{top-K recommendations}\}|}{|\{\text{relevant items}\}|}$$We report results for $ K = \{1, 5, 10\} $ to capture performance at different recommendation list lengths. \item \textbf{NDCG@K}: Normalized Discounted Cumulative Gain accounts for ranking quality by assigning higher weights to top positions: $$\text{NDCG@}K = \frac{\text{DCG@}K}{\text{IDCG@}K}, \quad \text{DCG@}K = \sum_{i=1}^{K} \frac{2^{rel_i} - 1}{\log_2(i+1)} $$ where $ rel_i $ is the relevance score of item at position $i$, and IDCG is the ideal DCG. Results are reported for $K = \{5, 10\}$.
    \item \textbf{MRR}: Mean Reciprocal Rank evaluates the ranking of the first relevant item:
$$\text{MRR} = \frac{1}{|Q|} \sum_{i=1}^{|Q|} \frac{1}{\text{rank}_i}$$ where $\text{rank}_i$ is the position of the first relevant item for query $i$.
\end{itemize}
These metrics provide complementary perspectives: Recall captures recommendation coverage, NDCG evaluates ranking quality, and MRR focuses on the position of the first relevant recommendation. We omit NDCG@1 as it's highly correlated with Recall@1 and provides limited additional insight.

\subsection{TADT-CSA variants}
All the TADT-CSA variants is shown as follows:
\begin{itemize}
\item  \textbf{TADT-CSA (w/o TAC)}: removes the TA condition from the similarity calculation in Eq. 11.
\item \textbf{TADT-CSA (w/o TAC \& CTP)}: removes both the TAC in the similarity calculation and the downstream CTP network, relying solely on the RP network to guide codebook updates.
\item \textbf{TADT-CSA (w/o TAC \& CTP \& RP)}: removes the TA condition and all downstream task networks, using only the DT loss to guide codebook updates.
\item \textbf{TADT-CSA (w/o CSA)}: removes the entire CSA module along with the quantile ranking loss, since the quantile ranking loss will use the codebook information. 
\item \textbf{TADT-CSA (w/o CSA \& TA)}: removes the CSA module, quantile ranking loss, and the TA signal in the DT, which degrades to the original DT framework.
\end{itemize}
\subsection{Parameter Sensitivity Figures}
Other parameter sensitivity evaluation experiments are shown in Fig \ref{fig:4}, including the balanced factor $\alpha$ in the codebook similarity calculation, and the quantile threshold $\beta$ in the quantile pairwise ranking loss. We run TADT-CSA with different parameter values of $\alpha$ and $\beta$ on the KuaiRand-Pure dataset. It indicates that the evaluation metric MRR are not sensitive with both parameters. 
\begin{figure*}[htbp]
\centering
\includegraphics[width=0.8\textwidth]{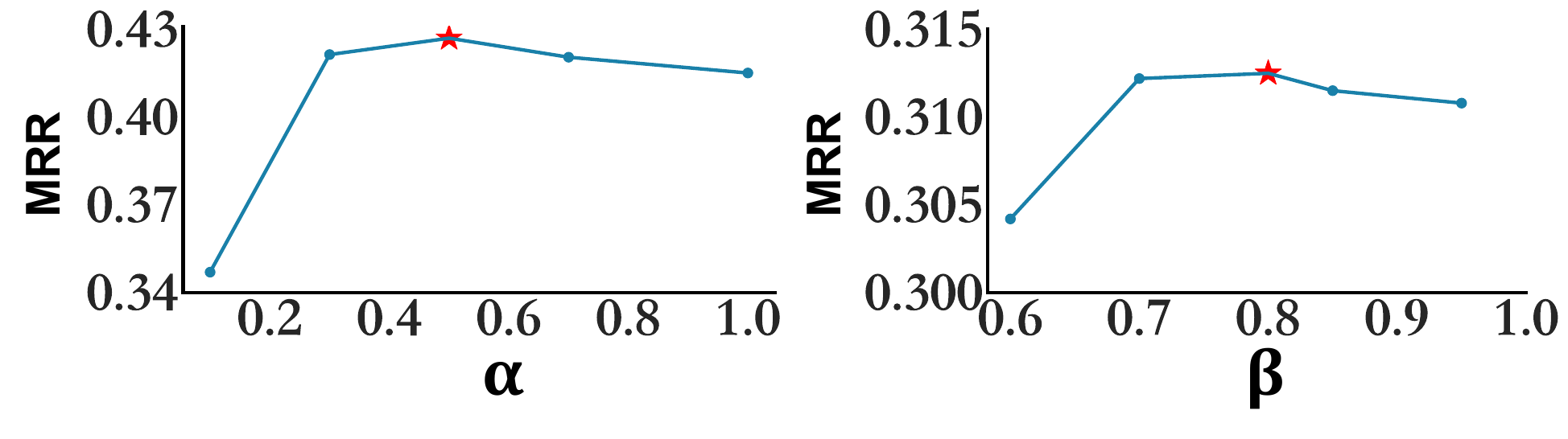}
\caption{Parameter sensitivity evaluation on $\alpha$ and $\beta$.}
\label{fig:4}
\end{figure*}

\subsection{Computational Cost Analysis}
We compared the computational costs of different methods on the KuaiRand-Pure training dataset, as shown in Table \ref{table:4}. The results indicate that DT-based methods are more computationally efficient than LLM-based methods, highlighting the lightweight nature of DT models in terms of shorter running time, lower GPU memory consumption, and faster convergence. Specifically, the running time and GPU memory usage of TADT-CSA are comparable to those of existing DT-based methods, demonstrating that the inclusion of the CSA module does not increase model complexity or computational overhead.

\begin{table}[H]
\centering
\begin{tabular}{cccc}
\hline
\rule{0pt}{2.4ex}Models & \#Running Time & \#GPU Memory & \#Convergence Steps\\
\hline
\rule{0pt}{2.4ex}DT4Rec & 4.5h & 8GB & 14,280 \\
DT4IER   & 3.9h & 8GB & 12,380 \\
CDT4Rec  & 3.4h & 8GB & 9,600 \\
LLM & 18.6h & 24GB & 307,200 \\
TADT-CSA & 4.0h & 12GB & 8,600 \\
\hline
\end{tabular}
\caption{Computational cost of different methods.}
\label{table:4}
\end{table}

\begin{figure*}[htbp]
\centering
\includegraphics[width=1.0\textwidth]{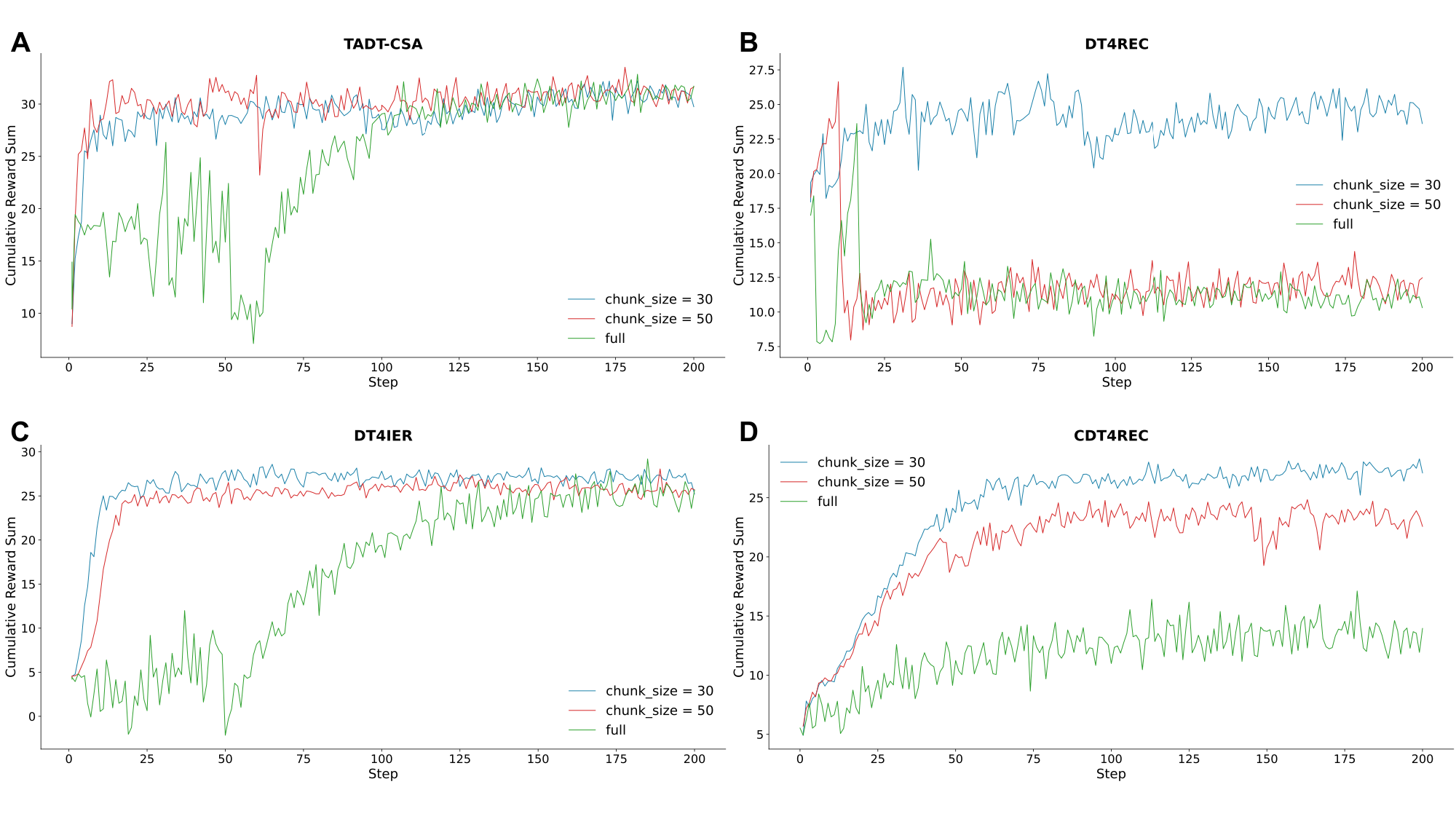}
\caption{Trajectory stitching evaluation of different DT-based recommendation methods.}
\label{fig:5}
\end{figure*}

\subsection{Trajectory Stitching Evaluation}
Since Decision Transformer (DT) models often suffer from the trajectory stitching problem, as discussed in prior works such as ACT \cite{gao2024act}, we conduct experiments on the KuaiRand-Pure dataset to evaluate how different chunk lengths impact model performance. Specifically, we divide full-length trajectories (length = 200) into smaller chunks of fixed lengths (e.g., 30, 50) and compare the cumulative reward achieved by TADT-CSA against other DT-based baselines, including DT4Rec, CDT4Rec and DT4IER.

As shown in Figure \ref{fig:5}, TADT-CSA consistently outperforms all baselines across different chunk sizes, and maintains strong performance even when using the full trajectories. In contrast, other DT-based methods exhibit noticeably degraded performance when using full-length trajectories, despite the fact that full sequences are typically expected to yield richer information and better results. We hypothesize that this counterintuitive behavior stems from the fact that excessively long sequences (e.g., 200 steps) pose significant optimization and modeling challenges for standard DTs, such as over-smoothing or failure to attend to earlier tokens effectively. This makes them more vulnerable to trajectory stitching issues in the full-length setting.

Moreover, DT4Rec and CDT4Rec perform relatively well on shorter chunks (e.g., 30), but their performance drops and convergence slows as chunk size increases. DT4IER, on the other hand, demonstrates stable convergence across all chunk sizes (30, 50, and full), with its cumulative reward reaching similar levels in each case. However, it still underperforms compared to TADT-CSA, indicating that despite its robustness, DT4IER remains slightly less effective in capturing long-term value signals. 

These findings demonstrate the implicit trajectory stitching capability of TADT-CSA and its superior ability to model long-horizon dependencies in sequential recommendation scenarios.

\end{document}